\newtheorem{claim}{Claim}
\newtheorem{lemma}{Lemma}
\newtheorem{theorem}{Theorem}
\newtheorem{remark}{Remark}
\newtheorem{corollary}{Corollary}
\crefname{table}{Table}{Tables}
\crefname{claim}{Claim}{Claims}
\newcommand{\KL}{\mathbf{KL}}
\newcommand{\Hent}{\mathbf{H}}
\newcommand{\JS}{\mathbf{JS}}
\newcommand{\twelch}{\mathbf{t}_{\mathrm{Welch}}}
\begin{document}

\title{Bounds on Bayes Factors for Binomial A/B Testing}

\author{\authorname{Maciej Skorski
\sup{1}
}
\affiliation{\sup{1}DELL}
\email{maciej.skorski@gmail.com}
}

\keywords{Hypothesis Testing, Bayesian Statistics, AB Testing}

\abstract{
Bayes factors, in many cases, have been proven to bridge the classic -value based significance testing and bayesian analysis of posterior odds. This paper discusses this phenomena within the binomial A/B testing setup (applicable for example to conversion testing). 
It is shown that the bayes factor is controlled by the \emph{Jensen-Shannon divergence} of success ratios in two tested groups, which can be further bounded by the Welch statistic. As a result, bayesian sample bounds almost match frequentionist's sample bounds.
The link between Jensen-Shannon divergence and Welch's test as well as the derivation are an elegant application of tools from information geometry.
}

\onecolumn \maketitle \normalsize \setcounter{footnote}{0} \vfill

\section{Introduction}
\label{sec:introduction}

\subsection{Motivation}

\paragraph{A/B testing}
A/B testing is the technique of collecting data from two parallel experiments and comparing them by probabilistic inference.
A particularly important case is assessing which of two success-counting experiments achieves a higher success rate. This naturally applies to evaluating conversion rates on two different versions of a webpage. A typical question being asked is
if there is a difference (called also non-zero effect) in conversion between groups: $p_1,p_2$ are unknown conversion rates in two experiments, and the task is to compare hypotheses 
$H_0 = \{p_1=p_2\}$ and $H_a = \{p_1\not=p_2\}$, given observed data. In the frequentionist approach, one falsifies $H_0$ by the \emph{two-sample t-test}\cite{Welch}. In the bayesian approach one \emph{evaluates the strength of both} $H_0$ and $H_a$ and decides based the ratio called \emph{bayes factor}
\begin{align}
K=\frac{\Pr[  \mathcal{D} | H_{0}] }{\Pr[  \mathcal{D} | H_{a}] }
\end{align}
which converted by the Bayes theorem to $\frac{\Pr[ H_0 |  \mathcal{D} ] }{\Pr[  H_{a} |  \mathcal{D} ] } = K\cdot \frac{\Pr[H_0]}{\Pr[H_a]}$ quantifies \emph{posterior odds} and allows a research to choose a model \emph{more plausible given data} (usually one gives $H_0$ and $H_a$ same chance of getting considered and sets $\Pr[H_0]=\Pr[H_a]=\frac{1}{2}$). The decision rule and confidence depends on the magnitude of $K$~\cite{Kass95,jeffreys1998theory}. In the bayesian approach a hypothesis assigns an arbitrary distribution to
parameters which is more general.

\paragraph{Testing counts proportions}
Suppose that empirical data $\mathcal{D}$ has $r_i=r$ runs and $r\cdot \bar{p}_i$ successes in the $i$-th experiment, $i=1,2$. Under the binomial counting model, the data likelihood under a hypothesis $H$ equals
\begin{align}
\Pr[\mathcal{D}|H]=\int \prod_{i=1}^{2} p_i^{\bar{p}_i r}(1-p_i)^{(1-\bar{p})_i r}\mbox{d}\mathbb{P}_{H}(p_1,p_2)
\end{align}
where the \emph{prior} distribution $\mathbb{P}_{H}(\cdot,\cdot)$ reflects what is assumed prior to seeing data (and what will be tested); one can for example choose $\{p_1=p_2=0.1\%\}$  for $H=H_0$ and $\{p_1\not=p_2\}$ for $H_a$ uniformly over all valid values of $p_1,p_2$, but in practice more informative priors are used because some configurations of values are unrealistic (e.g. extremely low or high conversion). The corresponding factor $K$ can be computed for example by the R package \texttt{BayesFactor}~\cite{bayesfactor2018}.

\paragraph{Problem: Bayesian A/B testing power}
Estimates, neither frequentist nor bayesian, will not be conclusive without sufficiently many samples.
Frequentists widely use rules of thumbs that are derived based on t-tests. Under the bayesian methodology this is little more complicated because hypotheses can be arbitrary priors over parameters. Under the binomial A/B model, we will answer the following questions
\begin{itemize}
\item when, given data, a \emph{bayesian} hypothesis on zero effect may be rejected ($K \ll 1$ for some $K_a$)?
\item what is the relation to the classical t-test? 
\end{itemize}
This will allow us to understand \emph{data limitations} when doing bayesian inference, and relate
them to widely-spread frequentionist rule of thumbs.

\subsection{Related Works and Contribution}

Our problem, as stated, is a question about \emph{maximizing minimal bayes factor}. It is known that for certain problems bayes factors can be related to frequentionist's p-values~\cite{Edwards1963,Kass95,Goodman1999TowardEM} and thus bridges the Bayesian and frequentionist world (this should be contrasted with a wide-spread belief that both methods are very incompatible~\cite{Kruschke2018}). The novel  contributions of this paper are (a) bounding the Bayes factor for binomial distributions (b) discussion of sample bounds for binomial A/B testing in relation to the frequentionist approach.

\paragraph{Main result: Bayes factor and Welch's statistic}

The following theorem shows that no ``zero-effect'' hypothesis can be falsified, unless the number of samples is big in relation to a certain \emph{dataseet statistic}. This statistic turns out to be the Jensen-Shannon divergence, well-known in information theory. It is in turn bounded by the Welch's t-statistic.
\begin{theorem}[Bayes Factors for Binomial Testing]\label{thm:main}
Consider two independent experiments, each with $r$ independent trials with unknown success probabilities $p_1$ and $p_2$ respectively. Let observed data $\mathcal{D}$ has $r\cdot \bar{p}_i$ successes and $r\cdot (1-\bar{p}_i)$ failures for group $i$. Then
\begin{align}
\max_{H_0: \{p_1=p_2\} } \min_{H_a} \frac{\Pr[H_0|\mathcal{D}]}{\Pr[H_a|\mathcal{D}]} = \mathrm{e}^{-2r\cdot \JS(\bar{p_1},\bar{p_2})} 
\end{align}
where the maximum is over null hypothesis (priors) $H_0$ over $p_1,p_2$ such that $p_1=p_2$, the minimum is over
all valid alternative hypothesis (priors) over $p_1,p_2$, and $\JS$ denotes the Jensen-Shannon divergence. 

Moreover, the Jensen-Shannnon divergence is bounded by the Welch's $t$-statistic (on $\mathcal{D}$) 
\begin{align}
\JS(\bar{p_1},\bar{p_2}) \geqslant\frac{ \twelch(t,\bar{p}_1,\bar{p}_2)^2}{4r}
\end{align}
so that we can bound
\begin{align}
\max_{H_0: \{p_1=p_2\} } \min_{H_a} \frac{\Pr[H_0|\mathcal{D}]}{\Pr[H_a|\mathcal{D}]} \leqslant \mathrm{e}^{-\twelch(t,\bar{p}_1,\bar{p})^2/2} 
\end{align}
\end{theorem}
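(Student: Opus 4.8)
The plan is to handle the two halves separately and then chain them. For the exact equality I would first use that $\Pr[\mathcal{D}\mid H]$ is \emph{linear} in the prior $\mathbb{P}_{H}$, so each optimization over priors is the maximization of a linear functional over a convex set of measures and is attained at an extreme point, i.e.\ a Dirac mass. Thus $\max_{H_0}\Pr[\mathcal{D}\mid H_0]$ reduces to maximizing the single-parameter likelihood $p^{(\bar p_1+\bar p_2)r}(1-p)^{(2-\bar p_1-\bar p_2)r}$ over the common value $p$, whose maximizer is the pooled rate $\bar m=\tfrac{\bar p_1+\bar p_2}{2}$, while $\max_{H_a}\Pr[\mathcal{D}\mid H_a]$ reduces to the unconstrained MLE $p_i=\bar p_i$. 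Assuming equal prior odds $\Pr[H_0]=\Pr[H_a]$, the inner $\min_{H_a}$ maximizes the denominator and the outer $\max_{H_0}$ maximizes the numerator independently (the denominator does not depend on $H_0$), so the value equals the explicit ratio of these two maximized likelihoods.

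Next I would identify that ratio with $\mathrm{e}^{-2r\JS}$. The key identity is that the Jensen--Shannon divergence of two Bernoulli parameters is exactly the Jensen gap of the binary entropy,
\begin{align}
\JS(\bar p_1,\bar p_2)=\Hent(\bar m)-\tfrac12\bigl(\Hent(\bar p_1)+\Hent(\bar p_2)\bigr),\qquad \bar m=\tfrac{\bar p_1+\bar p_2}{2},
\end{align}
which follows by expanding $\JS=\tfrac12\KL(\bar p_1\|\bar m)+\tfrac12\KL(\bar p_2\|\bar m)$. Substituting $\Hent(p)=-p\log p-(1-p)\log(1-p)$ and taking logarithms shows the log-likelihood-ratio collects precisely into $-2r\,\JS(\bar p_1,\bar p_2)$, giving the claimed equality. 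This step is bookkeeping once the entropy-gap form is in hand.

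For the lower bound on $\JS$ I would exploit that $\Hent''(t)=-\tfrac{1}{t(1-t)}$ and, crucially, that $g(t):=\tfrac{1}{t(1-t)}=\tfrac1t+\tfrac1{1-t}$ is \emph{convex} on $(0,1)$. Setting $\phi(w)=\tfrac12\bigl(\Hent(\bar m+w)+\Hent(\bar m-w)\bigr)$, one has $\JS=\phi(0)-\phi\bigl(\tfrac{\bar p_1-\bar p_2}{2}\bigr)$ with $\phi'(0)=0$ and $\phi''(w)=-\tfrac12\bigl(g(\bar m+w)+g(\bar m-w)\bigr)\le -g(\bar m)$ by convexity; a double integration then gives
\begin{align}
\JS(\bar p_1,\bar p_2)\;\ge\;\frac{g(\bar m)}{2}\Bigl(\tfrac{\bar p_1-\bar p_2}{2}\Bigr)^2=\frac{(\bar p_1-\bar p_2)^2}{8\,\bar m(1-\bar m)}.
\end{align}
Identifying the right-hand side with $\twelch^2/(4r)$ and substituting into the exact formula yields $\mathrm{e}^{-2r\JS}\le \mathrm{e}^{-2r\cdot \twelch^2/(4r)}=\mathrm{e}^{-\twelch^2/2}$, the final claim.

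The main obstacle I anticipate is the exact variance normalization in that last identification: the convexity argument naturally produces the \emph{pooled} variance $\bar m(1-\bar m)$ in the denominator, so matching $\tfrac{(\bar p_1-\bar p_2)^2}{8\bar m(1-\bar m)}$ to the stated $\twelch^2/(4r)$ forces the denominator of $\twelch$ to be read as the pooled variance $2\bar m(1-\bar m)/r$; a separate-group variance would give a strictly larger target that the Jensen-gap bound does not in general dominate. Everything else is a standard extreme-point argument plus elementary calculus, so pinning down this normalization (together with the equal-prior-odds convention used in the first step) is where I would spend the most care.
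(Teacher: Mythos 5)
Your first half is, in essence, the paper's own proof: the paper likewise reduces both optimizations to point masses --- the alternative at $(\bar p_1,\bar p_2)$, the null at the pooled rate $\theta^{*}=\tfrac{\bar p_1+\bar p_2}{2}$ (justified there via convexity of cross-entropy and the identity $\Hent(\bar p_1,p)+\Hent(\bar p_2,p)=2\Hent(\theta^{*},p)$) --- assumes equal prior odds so that posterior odds equal the likelihood ratio, and collects the entropies into $\mathrm{e}^{-2r\JS(\bar p_1,\bar p_2)}$ exactly as you do. That part is correct and matches.

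The second half is where you genuinely diverge, and the obstacle you flagged is real --- but it is a defect of the theorem, not of your argument. The paper derives the inequality from its Lemma~2 / Claim~3, the quadratic bound $\KL(\theta,p)\geqslant \frac{(\theta-p)^2}{2\theta(1-\theta)}$, applied at $p=\theta^{*}$ and combined with a tacit harmonic--arithmetic-mean step to assemble the separate-variance denominator $\bar p_1(1-\bar p_1)+\bar p_2(1-\bar p_2)$ of Claim~1's Welch statistic (reading Claim~1's $r^{-1/2}$ as the standard $r^{1/2}$, without which the final bound is vacuous). That lemma is false: for $\theta=0.9$, $p=0.5$ one has $\KL(0.9,0.5)\approx 0.368$ while $\frac{(0.4)^2}{2\cdot 0.9\cdot 0.1}\approx 0.889$; more drastically, fixing $p$ and letting $\theta\to 1$ keeps the left side bounded by $\log(1/p)$ while the right side diverges. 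Its proof fails because $\log(1+u)\geqslant u-u^2/2$ holds only for $u\geqslant 0$, yet the argument needs it (in reversed form) on coordinates where $x_i>p_i$, and for probability vectors some coordinate must satisfy this. The theorem's inequality inherits the failure: for $\bar p_1=0.9$, $\bar p_2=0.1$ one gets $\JS\approx 0.368$ but $\twelch^2/(4r)=\frac{0.64}{4\cdot 0.18}\approx 0.889$; in general $\JS\leqslant \log 2$ while $\twelch^2/(4r)$ is unbounded, so no argument can rescue the statement as written.

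By contrast, your Jensen-gap argument (convexity of $t\mapsto \frac{1}{t(1-t)}$ plus double integration) is sound and proves the correct substitute $\JS(\bar p_1,\bar p_2)\geqslant \frac{(\bar p_1-\bar p_2)^2}{8\,\bar m(1-\bar m)}$, i.e.\ the bound with the \emph{pooled} variance; since $t(1-t)$ is concave, $\bar p_1(1-\bar p_1)+\bar p_2(1-\bar p_2)\leqslant 2\bar m(1-\bar m)$, so this is strictly weaker than the stated separate-variance claim, exactly as you suspected. With the pooled (score-test) statistic in place of Welch's, your chain yields a valid bound of the form $\mathrm{e}^{-z^2/2}$, and since pooled and unpooled variances agree to first order when $\bar p_1\approx\bar p_2$ or both rates are small (the regime relevant to the paper's sample-size corollary), the paper's qualitative conclusions survive in your corrected form. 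In short: part one matches the paper; part two is a different route and, in fact, the only salvageable one, because the paper's own route rests on a false lemma.
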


\begin{remark}[Most favorable hypotheses] Note that
\begin{itemize}
\item Maximally favorable alternative ($H_a$ which maximizes $\Pr[\mathcal{D}|H_a]$) is $p_1 = \theta_1$ and $q_1=\theta_2$
\item Maximally favorable null of the form $p_1=q_1$ is $p_1=q_1=\frac{\bar{p_1}+\bar{p_2}}{2}$
\end{itemize}
If null is of the form $p=q=x_0$ then the bound becomes $\mathrm{e}^{-r\cdot \KL(\bar{p}_1,x_0)-r\cdot \KL(\bar{p}_2,x_0)}$.
\end{remark}
\paragraph{Application: sample bounds}
The main result implies the following sample rule
\begin{corollary}[Bayesian Sample Bound]\label{cor:sample_bound}
To confirm the non-zero effect ($p_1\not=p_2$) the number of samples for the bayesian method should be
\begin{align}\label{eq:1}
r \gg \frac{1}{2\JS(p_1,p_2)}
\end{align}
\end{corollary}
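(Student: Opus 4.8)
The plan is to read the sample bound directly off the max--min identity in Theorem~\ref{thm:main}, interpreting ``confirming a non-zero effect'' as the requirement that the posterior odds favouring $H_0$ be driven decisively below one. First I would fix the decision-theoretic meaning of ``confirm'': under the Bayesian rule, rejecting the null $\{p_1=p_2\}$ in favour of $\{p_1\neq p_2\}$ means the posterior odds $\Pr[H_0\mid\mathcal{D}]/\Pr[H_a\mid\mathcal{D}]$ should fall below a fixed evidential threshold $\tau\in(0,1)$ (one of the Jeffreys or Kass--Raftery cut-offs). Because Theorem~\ref{thm:main} computes the \emph{maximally favourable} value of these odds --- the supremum over null priors of the infimum over alternatives --- it suffices to impose the threshold on that extremal value: if even the most favourable null is beaten, so is every null.

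Second I would substitute the identity. Setting the extremal odds below $\tau$ gives
\begin{align}
\mathrm{e}^{-2r\cdot\JS(\bar p_1,\bar p_2)}\leqslant\tau,
\end{align}
and taking logarithms yields the explicit requirement
\begin{align}
r\geqslant\frac{\log(1/\tau)}{2\,\JS(\bar p_1,\bar p_2)}.
\end{align}
Since $\log(1/\tau)$ is an $O(1)$ evidential constant, absorbing it into the asymptotic notation produces $r\gg 1/\bigl(2\,\JS(\bar p_1,\bar p_2)\bigr)$, the bound phrased in the observed rates.

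Third I would pass from the empirical rates $\bar p_i$ to the population rates $p_i$. Under the alternative the trials are i.i.d.\ Bernoulli, so by the law of large numbers $\bar p_i\to p_i$ almost surely as $r\to\infty$; continuity of the Jensen--Shannon divergence then gives $\JS(\bar p_1,\bar p_2)\to\JS(p_1,p_2)$. Replacing the data statistic by its limit turns the display into the stated $r\gg 1/\bigl(2\,\JS(p_1,p_2)\bigr)$. A useful sanity check is that $p_1\neq p_2$ is precisely the condition guaranteeing $\JS(p_1,p_2)>0$ (the divergence vanishes only when its arguments coincide), so the right-hand side is finite and the bound is non-vacuous.

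The main obstacle I anticipate is a circularity in the empirical-to-population transfer: the concentration $\bar p_i\approx p_i$ that justifies replacing the statistic itself requires $r$ to be large, which is exactly what we are bounding. The clean resolution is to read the result as an asymptotic scaling law whose threshold is self-consistent --- once $r$ exceeds $\Theta\bigl(1/\JS(p_1,p_2)\bigr)$ the empirical divergence lies within a constant factor of the population one, so the two versions of the bound agree up to the constants already hidden inside ``$\gg$''. Making this quantitative (a non-asymptotic version with explicit concentration of $\JS(\bar p_1,\bar p_2)$ around $\JS(p_1,p_2)$) is the one place demanding work beyond invoking Theorem~\ref{thm:main}.
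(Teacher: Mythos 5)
Your proposal is correct and follows essentially the same route as the paper: the corollary is read directly off the max--min identity of \Cref{thm:main} by requiring $\mathrm{e}^{-2r\cdot\JS}\ll 1$ and absorbing the evidential threshold $\log(1/\tau)$ into the ``$\gg$'' notation. Your third step (passing from $\bar{p}_i$ to $p_i$ via the law of large numbers and continuity of $\JS$) is a careful elaboration of what the paper dismisses with the remark that ``both formulas need assumptions on locations of the parameters,'' i.e.\ the bound is a planning rule evaluated at the hypothesized population rates, exactly as in frequentist power analysis.
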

Under the frequenionist method the rule of thumb is $\twelch \gg 1$, which gives (see \Cref{sec:prelim})
\begin{align}\label{eq:2}
r \gg  \frac{2(p_1(1-p_1) +p_2(1-p_2)}{(p_1-p_2)^2}
\end{align}
Note that both formulas needs assumptions on locations of the parameters. In particular, testing smaller effects or effects with higher variance require more samples.

Bounds \Cref{eq:1} and \Cref{eq:2} are close to each other by a constant factor (a different small factor is necessary to make the bound small in both the bayesian credibility and p-value sense). The difference (under the normalized constant) is illustrated on \Cref{fig:example}, for the case when one wants to test a relative uplift of $10\%$.

\begin{figure}
\includegraphics[width=0.49\textwidth]{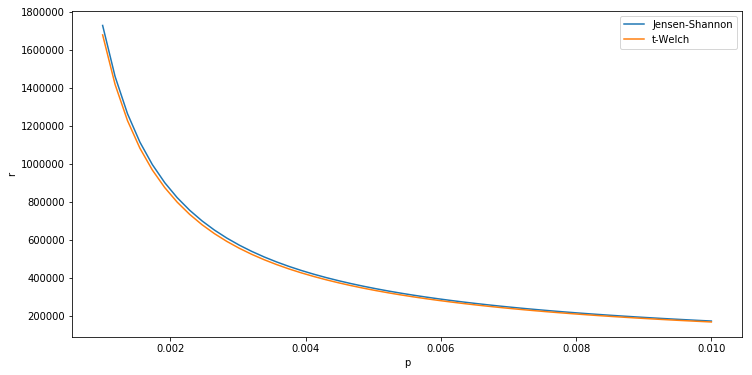}
\caption{Comparison of the bayesian~\eqref{eq:1} (bayesian) and the frequentionist~\eqref{eq:2} sample lower bounds, where $p_2=p$ and $p_1=p_1\cdot (1+\delta)$ for $\delta=0.1$ (10\% uplift). Both formulas are multiplied by a factor of 2 to accommodate meaningful confidence.}
\label{fig:example}
\end{figure}

Since high values of $\twelch$ means small p-values, we conclude that the frequentionist p-values bounds the bayes factor and indeed, are evidence against a null-hypothesis in the well-defined bayesian sense. However, because of the scaling
$\twelch \to \mathrm{e}^{-\twelch^2/2}$, this is true for p-values much lower than the standard threshold of 0.05.
In some sense, the bayesian approach is more conservative and less reluctant to reject than frequentionist tests; this conclusion is shared with other works~\cite{Goodman1999TowardEM}.


\section{Preliminaries}

\paragraph{Entropy, Divergence}

The binary cross-entropy of $p$ and $q$ is defined by
\begin{align}
\Hent(p,q) = -p\log (1-p) - (1-p)\log(1-q)
\end{align}
which becomes the standard (Shannon) binary entropy when $p=q$, denoted as $\Hent(p) = \Hent(p,p)$. The Kullback-Leibler divergence is defined as
\begin{align}
\KL(p,q) = \Hent(p,q)  - \Hent(p)
\end{align}
and the Jensen-Shannon divergence~\cite{61115} is defined as
\begin{align}
\JS(p,q) = \Hent(p,q)  - \frac{1}{2}\Hent(p)-\frac{1}{2}\Hent(q)
\end{align}
(always positive because the entropy is concave).

The following lemma shows that the cross-entropy function is \emph{convex} in the second argument. This should be contrasted with the fact that the entropy function (of one argument) is concave.
\begin{lemma}[Convexity of cross-entropy]\label{lemma:cross_ent_convex}
For any $p$ the mapping $x\to \Hent(p,x)$ is convex in $x$.
\end{lemma}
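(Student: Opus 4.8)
The plan is to prove convexity by the standard second-derivative test, fixing $p\in[0,1]$ and regarding $f(x):=\Hent(p,x)=-p\log x-(1-p)\log(1-x)$ as a function of $x\in(0,1)$. First I would differentiate twice; a direct computation gives
\[
f''(x)=\frac{p}{x^2}+\frac{1-p}{(1-x)^2}.
\]
Since $p\ge 0$ and $1-p\ge 0$ while $x^2>0$ and $(1-x)^2>0$ on the open unit interval, every term is nonnegative, so $f''(x)\ge 0$ throughout and $f$ is convex in $x$. Because this argument goes through for each fixed value of the first argument $p$, the claim follows for all $p$.

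An even cleaner route, which I would actually present as the main argument, avoids derivatives entirely: write $\Hent(p,x)=p\cdot(-\log x)+(1-p)\cdot(-\log(1-x))$. The function $t\mapsto-\log t$ is convex, and convexity is preserved under composition with the affine map $x\mapsto 1-x$, so both $-\log x$ and $-\log(1-x)$ are convex in $x$; as $p$ and $1-p$ are nonnegative weights, their sum is a nonnegative combination of convex functions and hence convex. This viewpoint also clarifies the contrast flagged in the statement with the \emph{concavity} of the one-argument entropy, since on the diagonal $p\mapsto\Hent(p,p)$ the second argument picks up an extra $p$-dependence that changes the sign.

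I do not expect a genuine obstacle here, as the statement reduces to a one-line computation; the only point deserving a word of care is the domain. The logarithms force $x\in(0,1)$, and at the endpoints $f$ may take the value $+\infty$, so I would state convexity on the open interval $(0,1)$ (where $f''\ge 0$, equivalently where the weighted-sum argument applies) and note that the extended-real-valued function remains convex on the closed interval $[0,1]$ by lower semicontinuity of $-\log$.
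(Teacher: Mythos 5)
Your proposal is correct, and your main argument (writing $\Hent(p,x)$ as the nonnegative combination $p\cdot(-\log x)+(1-p)\cdot(-\log(1-x))$ of convex functions) is essentially the same as the paper's proof, which establishes the same decomposition via the two-point Jensen inequality for $-\log$. The second-derivative computation and the remark about the domain $(0,1)$ are harmless additions; no gap.
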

\begin{proof}
Since $-p\cdot \log(\cdot)$ for $p\in [0,1]$ is convex we obtain 
\begin{align*}
-\gamma_1 p \log x_1 - \gamma_2p  \log x_2 \geqslant -p\log (\gamma_1 x_1 + \gamma_2 x_2)
\end{align*}
for any $x_1,x_2$ and any $\gamma_1,\gamma_2 \geqslant 0$, $\gamma_1+\gamma_2 = 1$.
Replacing $x_i$ by $1-x_i$ and $p$ by $1-p$ in the above inequality gives us also
\begin{multline*}
-\gamma_1 (1-p) \log (1-x_1) - \gamma_2 (1-p)  \log (1-x_2) \\ \geqslant -(1-p)\log (\gamma_1 (1-x_1) + \gamma_2 (1-x_2)) \\
= -(1-p)\log (1-\gamma_1x_1) - \gamma_2 x_2)
\end{multline*}
Adding side by side yields
\begin{align*}
\gamma_1\Hent(p,x_1)+\gamma_2\Hent(p,x_2) \geqslant \gamma_1\Hent(p,x_1)+\gamma_2\Hent(p,x_2)
\end{align*}
which finishes the proof.
This argument works for multivariate case, when $p,x$ are probability vectors.
\end{proof}

\begin{lemma}[Quadratic bounds on KL/cross-entropy]\label{thm:ent_quadratic}
For any $p$ it holds that
\begin{align}
\KL(p,x) \geqslant \left(\frac{1}{p}+\frac{1}{1-p}\right)\cdot (x-p)^2
\end{align}
\end{lemma}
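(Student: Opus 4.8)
The plan is to reduce the two-variable claim to a one-variable statement and attack it with a second-order Taylor expansion anchored at $x=p$. Writing $g(x) = \KL(p,x) = p\log\frac{p}{x} + (1-p)\log\frac{1-p}{1-x}$, I would first record the two boundary facts that make a quadratic estimate natural: $g(p)=0$ and
\[
g'(x) = -\frac{p}{x} + \frac{1-p}{1-x}, \qquad g'(p) = -\frac{p}{p} + \frac{1-p}{1-p} = 0 .
\]
Thus $p$ is simultaneously a zero and a critical point of $g$, so the whole magnitude of $\KL(p,x)$ is carried by the second-order behaviour, which is exactly where a bound of the form $(\text{const})\cdot(x-p)^2$ should originate. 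The convexity already established in \Cref{lemma:cross_ent_convex} guarantees $g\geqslant 0$ with equality only at $x=p$, which fixes the sign and confirms that a genuine lower bound of quadratic type is the right target.

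Next I would differentiate once more, obtaining $g''(x) = \frac{p}{x^2} + \frac{1-p}{(1-x)^2}$, and invoke Taylor's theorem with the Lagrange remainder: since the zeroth- and first-order terms vanish, $\KL(p,x) = \tfrac12 g''(\xi)\,(x-p)^2$ for some $\xi$ strictly between $p$ and $x$. The stated constant $\frac1p+\frac1{1-p}$ would then follow provided I can establish the pointwise lower bound
\[
g''(\xi) \;=\; \frac{p}{\xi^2} + \frac{1-p}{(1-\xi)^2} \;\geqslant\; 2\left(\frac1p+\frac1{1-p}\right)
\]
for every admissible intermediate point $\xi$, because the factor $\tfrac12$ in the remainder then cancels against the factor $2$ to leave precisely $\frac1p+\frac1{1-p}$. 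I would try to prove this by analysing $g''$ directly: locating its critical points from $\frac{p}{\xi^3} = \frac{1-p}{(1-\xi)^3}$, comparing the resulting extremal value against the target, and splitting into the cases $\xi<p$ and $\xi>p$ so that the monotone branch of each summand can be exploited separately.

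The hard part is precisely this second-derivative estimate, and I expect it to absorb essentially all the difficulty of the lemma. The delicate point is that at $\xi=p$ one has $g''(p)=\frac1p+\frac1{1-p}$ exactly, i.e. only \emph{half} of the quantity demanded, so no argument that merely evaluates $g''$ near $p$ can succeed; the missing factor of two must be recovered entirely from how far the intermediate point $\xi$ is forced away from $p$ and from the growth of $g''$ towards the endpoints of $(0,1)$. I would therefore check the extremal and near-boundary configurations of $\xi$ first, since that is where the inequality is tightest and where it will be decided whether the full constant $\frac1p+\frac1{1-p}$ is attainable or whether only a smaller multiple of $(x-p)^2$ survives the remainder analysis.
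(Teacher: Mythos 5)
You have, in effect, discovered that the lemma as printed is false, and you should trust the computation in your final paragraph rather than try to complete the plan. Since $g(p)=g'(p)=0$ and $g''(p)=\frac1p+\frac1{1-p}$, Taylor's theorem forces $\KL(p,x)=\bigl(\tfrac12 g''(p)+o(1)\bigr)(x-p)^2$ as $x\to p$, so the best possible constant in any bound of the form $\KL(p,x)\geqslant C_p\,(x-p)^2$ is $\tfrac12\bigl(\frac1p+\frac1{1-p}\bigr)$ --- exactly half of what the statement asserts. Concretely, $p=\tfrac12$, $x=0.6$ gives $\KL(p,x)\approx 0.0204$ while $\bigl(\frac1p+\frac1{1-p}\bigr)(x-p)^2=0.04$. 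Your hope of recovering the missing factor of two ``from how far the intermediate point $\xi$ is forced away from $p$'' cannot materialize, because as $x\to p$ the intermediate point is squeezed onto $p$; the needed pointwise estimate $g''(\xi)\geqslant 2\bigl(\frac1p+\frac1{1-p}\bigr)$ fails on a whole neighbourhood of $p$, not merely at the single point. This is consistent with the paper itself: its proof actually concludes with the \emph{halved} bound ($-\sum_i p_i\log(x_i/p_i)\geqslant\sum_i(p_i-x_i)^2/(2p_i)$, i.e. $\KL(p,x)\geqslant\tfrac12\bigl(\frac1p+\frac1{1-p}\bigr)(x-p)^2$ after specializing to $(p,1-p)$ and $(x,1-x)$), and the halved form is what \Cref{claim:kl_ineq} invokes downstream; the factor $\tfrac12$ was evidently dropped in the lemma's display. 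So the correct move is to restate the lemma with the $\tfrac12$ and aim at that.

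Be warned, though, that even the halved statement cannot be closed by your Lagrange-remainder scheme, because $\xi\mapsto g''(\xi)=\frac{p}{\xi^2}+\frac{1-p}{(1-\xi)^2}$ is \emph{not} bounded below by $g''(p)$ on the segment between $p$ and $x$ once that segment moves toward $\tfrac12$; in fact the halved inequality is false globally: $p=0.9$, $x=0.5$ gives $\KL(p,x)\approx 0.368$ while $\frac{(x-p)^2}{2p(1-p)}\approx 0.889$. What does survive is seen most cleanly from the first-order identity $\KL(p,x)=\int_p^x\frac{t-p}{t(1-t)}\,\mathrm{d}t$, which yields $\KL(p,x)\geqslant\frac{(x-p)^2}{2v}$ with $v=\max\,t(1-t)$ over the interval between $p$ and $x$ (worst case $v=\tfrac14$ recovers Pinsker's $\KL\geqslant 2(x-p)^2$). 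For your reassurance: the paper's own proof trips at the corresponding spot, since the inequality $\log(1+u)\geqslant u-\tfrac12u^2$ it invokes holds only for $u\geqslant 0$ yet is applied with $u=(x_i-p_i)/p_i$, which is positive in at least one coordinate, and the counterexample above shows the damage is real rather than cosmetic; \Cref{claim:kl_ineq} inherits the same failure (take $\theta=0.9$, $p=0.5$), though it is harmless in the regime where both proportions lie on the same side of $\tfrac12$, as in the paper's motivating low-conversion application. Your instinct to stress-test the extremal configurations of $\xi$ was exactly the right one --- carried through, it identifies both the lost $\tfrac12$ and the correct variance factor in the denominator.
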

\begin{proof}
We will prove a general version. Let $(p_i)_i$ and $(x_i)_i$ be probability vectors of the same length.
By the elementary inequality
\begin{align}
\log (1+u)\geqslant u-\frac{1}{2}u^2
\end{align}
we obtain
\begin{align}
-\log (x_i/p_i) &=- \log (1-(p_i-x_i)/p_i) \geqslant \\
&-\frac{p_i-x_i}{p_i} +\frac{1}{2}\left(\frac{p_i-x_i}{p_i}\right)^2
\end{align}
multiplying both sides by $p_i$ and adding inequalities side by side we obtain
\begin{align}
-\sum p_i\log (x_i/p_i) &  \geqslant -\sum_i (x_i-p_i) + \sum_{i}\frac{\left(p_i-x_i\right)^2}{2p_i}\\
& =\sum_{i}\frac{\left(p_i-x_i\right)^2}{2p_i}
\end{align}
which means $\KL(x,p)\sum_{i}\frac{\left(p_i-x_i\right)^2}{2p_i}$. Our lemma follows by specializing to the vectors $(p,1-p)$ and $(x,1-x)$.
\end{proof}

\paragraph{2-Sample test}\label{sec:prelim}

To decide whether means in two groups are equal, under the assumption of unequal variances, one performs the Welch's t-test with the statistic
\begin{align}
\twelch = \frac{\mu_1-\mu_2}{\sqrt{\frac{s_1^2}{r_1} + \frac{s_2^2}{r_2}}}
\end{align}
where $s_i$ are sample variances and $\mu_i$ are sample means for group $i=1,2$. 
The null hypothesis is rejected unless the statistic is sufficiently high (in absolute terms). In our case the formula simplifies to

\begin{claim}\label{claim:welch}
If $r \theta_1$ and $r \theta_2$ success out of $r$ trials have been observed respectively in the first and the second group then
\begin{align}
\twelch(t,\theta_1,\theta_2) = r^{-\frac{1}{2}}\cdot \frac{\theta_1-\theta_2}{\sqrt{\theta_1(1-\theta_1) + \theta_2(1-\theta_2)}}
\end{align}
\end{claim}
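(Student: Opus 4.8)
The plan is to specialize the general two-sample Welch statistic to binary (Bernoulli) data and simply read off each ingredient. Each of the $r$ trials in group $i$ is a $0/1$ indicator of success, with exactly $r\theta_i$ ones and $r(1-\theta_i)$ zeros. The sample mean is therefore $\mu_i = \frac{1}{r}\cdot(r\theta_i) = \theta_i$, so the numerator $\mu_1-\mu_2$ of $\twelch$ is immediately $\theta_1-\theta_2$.

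The only genuine computation is the sample variance. For a $0/1$ sample with mean $\theta_i$, the plug-in (maximum-likelihood) variance is
\begin{align*}
s_i^2 = \frac{1}{r}\sum_{j=1}^{r}(x_{ij}-\theta_i)^2 = \theta_i(1-\theta_i)^2 + (1-\theta_i)\theta_i^2 = \theta_i(1-\theta_i),
\end{align*}
where the two middle terms collect the contributions of the $r\theta_i$ ones and the $r(1-\theta_i)$ zeros respectively, and the final equality factors out $\theta_i(1-\theta_i)$. I would also flag the single modelling choice here, namely biased versus unbiased variance: the unbiased estimator carries an extra factor $\frac{r}{r-1}$, but this tends to $1$ and is immaterial for the asymptotic rule of thumb invoked in \Cref{cor:sample_bound}, so the plug-in variance gives exactly $\theta_i(1-\theta_i)$.

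It then remains to substitute $\mu_i=\theta_i$, $s_i^2=\theta_i(1-\theta_i)$ and $r_1=r_2=r$ into the definition of $\twelch$. The denominator becomes $\sqrt{\frac{\theta_1(1-\theta_1)}{r}+\frac{\theta_2(1-\theta_2)}{r}}$; factoring the common $\frac{1}{r}$ out from under the square root isolates a single power of $r$ and leaves $\sqrt{\theta_1(1-\theta_1)+\theta_2(1-\theta_2)}$, which yields precisely the stated closed form.

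There is essentially no obstacle: the claim is a direct specialization of the Welch statistic, and the computation is routine once the Bernoulli variance identity $s_i^2=\theta_i(1-\theta_i)$ is in hand. The only point requiring care is correctly tracking the power of $r$ produced when the common factor $\frac{1}{r}$ is pulled out of the root in the denominator, together with the harmless biased-versus-unbiased normalization of the variance.
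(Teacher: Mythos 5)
Your approach is the right one (and essentially the only one available -- the paper states this claim without proof): substitute the Bernoulli sample mean $\mu_i=\theta_i$, the plug-in variance $s_i^2=\theta_i(1-\theta_i)$, and $r_1=r_2=r$ into the definition of $\twelch$. Your variance computation is correct, and the remark about the biased-versus-unbiased normalization is fine. But your final step contains a genuine error: factoring the common $\frac{1}{r}$ out from under the square root in the denominator gives
\begin{align*}
\sqrt{\tfrac{\theta_1(1-\theta_1)}{r}+\tfrac{\theta_2(1-\theta_2)}{r}}
= \frac{1}{\sqrt{r}}\sqrt{\theta_1(1-\theta_1)+\theta_2(1-\theta_2)},
\end{align*}
and dividing by this quantity \emph{multiplies} the statistic by $\sqrt{r}$, so the computation yields
\begin{align*}
\twelch = r^{+\frac{1}{2}}\cdot \frac{\theta_1-\theta_2}{\sqrt{\theta_1(1-\theta_1)+\theta_2(1-\theta_2)}},
\end{align*}
with exponent $+\tfrac12$, not the $r^{-\frac{1}{2}}$ of the stated claim. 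Your assertion that this ``yields precisely the stated closed form'' is therefore false; your own derivation, read carefully, contradicts the statement as printed.

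The resolution is that the $r^{-\frac{1}{2}}$ in the claim is a typo in the paper, and you should have flagged the discrepancy rather than paper over it. The rest of the paper is only consistent with the positive exponent: the later display for the t-statistic carries a factor $r^{\frac{1}{2}}$, and the theorem's chain $\max\min \Pr[H_0|\mathcal{D}]/\Pr[H_a|\mathcal{D}] = \mathrm{e}^{-2r\JS}$ together with $\JS \geqslant \twelch^2/(4r)$ gives the advertised bound $\mathrm{e}^{-\twelch^2/2}$ only if $\twelch^2$ grows linearly in $r$; under the $r^{-\frac{1}{2}}$ convention the exponent $\twelch^2/2$ would instead vanish as $r\to\infty$ and the main theorem's conclusion would be vacuous. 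So the mathematics you did is sound, but a blind proof must either prove the literal statement or report that it cannot be proved as written; claiming agreement with a formula your computation refutes is the one step here that would fail.
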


\section{Proof}

We change the notation slightly, unknown success rates will be $p$ and $q$, and corresponding successes $r\cdot \theta_1,r\cdot\theta_2$.
\paragraph{Alternatives}
Maximizng over \emph{alll} posible priors $\mathbf{P}_a$ over pairs $(p,q)$ we get
\begin{multline}\label{eq:like_a}
\max_{\mathbb{P}}\Pr[  \mathcal{D} | H_{a}] = \\ c\cdot\max_{\mathbb{P}}\int_{[0,1]^2} \mathrm{e}^{-r H(\theta_1,p)-r H(\theta_1,q)}\mathbb{P}_a(p,q)\mbox{d}(p,q) 
\end{multline}
where $c = \frac{1}{B(r\theta_1+1,r(1-\theta_1)+1)\cdot B(r\theta_2+1,r(1-\theta_2)+1)}$ is a normalizing constant,
which equals
\begin{align}
\max_{\mathbb{P}_a}\Pr[  \mathcal{D} | H_{a}] = c \cdot \mathrm{e}^{-r H(\theta_1) - rH(\theta_2)}
\end{align}
achieved for $\mathbf{P}_a$ being a unit mass at $(p,q)=(\theta_1,\theta_2)$.

\paragraph{Null}
Let $H_0$ states that the baseline is $p$ and the efect is $0$. Then we obain
\begin{align}\label{eq:like_0}
\Pr[  \mathcal{D} | H_{0}]  = c\cdot \mathrm{e}^{-rH(\theta_1,p)-rH(\theta_2,p)}
\end{align}
with the same normalizing constant $c$.

\paragraph{Bayes factor}
If none of two hypothesis is a priori prefered, that is when $\Pr[H_{0}]=\Pr[H_a]$, then the Bayes factor equals
the likelihood ratio (by Bayes theorem)
\begin{align}
\frac{\Pr[  H_{0}|\mathcal{D} ]}{\Pr[  H_{a} |  \mathcal{D}]} = \frac{\Pr[  \mathcal{D} | H_{0}]}{\Pr[  \mathcal{D} | H_{a}]}.
\end{align}
In turn the likelihood ratio (in favor of $H_0$) equals
\begin{align}
\min_{H_a}\frac{\Pr[  \mathcal{D} | H_{0}]}{\Pr[  \mathcal{D} | H_{a}]} = \mathrm{e}^{-r\cdot\left( \Hent(\theta_1,p) + \Hent(\theta_1,p) - \Hent(\theta_1)-\Hent(\theta_2) \right)}
\end{align}
(the normalizing constant $c$ cancells). Using the relation between the KL divergence and cross-entropy we obtain
\begin{align}\label{eq:b_factor1}
\min_{H_a}\frac{\Pr[  \mathcal{D} | H_{0}]}{\Pr[  \mathcal{D} | H_{a}]} = \mathrm{e}^{-r \KL(\theta_1,p)-r \KL(\theta_2,p)}
\end{align}
We will use the following observation
\begin{claim}
The expression $\KL(\theta_1,p)+\KL(\theta_2,p)$ is minimized under $p=\theta^{*}=\frac{\theta_1+\theta_2}{2}$, and achieves value $2\JS(\theta_1,\theta_2)$.
\end{claim}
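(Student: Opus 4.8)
The plan is to discard the $p$-independent terms, minimize the remaining sum of cross-entropies by a first-order condition, and then recognize the optimal value as twice the Jensen--Shannon divergence. First I would use the identity $\KL(\theta,p)=\Hent(\theta,p)-\Hent(\theta)$ to write
\begin{align*}
\KL(\theta_1,p)+\KL(\theta_2,p) = \Hent(\theta_1,p)+\Hent(\theta_2,p) - \Hent(\theta_1) - \Hent(\theta_2).
\end{align*}
Because $\Hent(\theta_1)$ and $\Hent(\theta_2)$ do not depend on $p$, minimizing the left-hand side over $p$ is the same as minimizing $g(p) := \Hent(\theta_1,p)+\Hent(\theta_2,p)$.

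Next I would locate the minimizer. By \Cref{lemma:cross_ent_convex} each map $p\mapsto\Hent(\theta_i,p)$ is convex, so $g$ is convex and any stationary point is automatically the global minimum. Expanding $g(p) = -(\theta_1+\theta_2)\log p - (2-\theta_1-\theta_2)\log(1-p)$ and setting $g'(p)=0$ yields the balance condition $(\theta_1+\theta_2)(1-p) = (2-\theta_1-\theta_2)p$, whose unique solution is $p=\frac{\theta_1+\theta_2}{2}=\theta^{*}$.

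The decisive algebraic step is the substitution $p=\theta^{*}$. Since $\theta_1+\theta_2 = 2\theta^{*}$ and $2-\theta_1-\theta_2 = 2(1-\theta^{*})$, the two cross-entropies collapse into a single entropy of the midpoint:
\begin{align*}
g(\theta^{*}) = -2\theta^{*}\log\theta^{*} - 2(1-\theta^{*})\log(1-\theta^{*}) = 2\Hent(\theta^{*}).
\end{align*}
Hence the minimal value equals $2\Hent(\theta^{*}) - \Hent(\theta_1) - \Hent(\theta_2)$, which is exactly $2\JS(\theta_1,\theta_2)$ once the first term of $\JS$ is read as the entropy evaluated at the midpoint $\theta^{*}$. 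I do not expect a genuine difficulty here; the only point requiring care is this last identification, namely verifying that the cross-entropy sum at the optimum reproduces twice the entropy of the averaged parameter, so that the minimum matches the definition of the Jensen--Shannon divergence rather than a lingering cross-entropy expression.
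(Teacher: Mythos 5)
Your proof is correct and takes essentially the same route as the paper: the same decomposition $\KL(\theta_i,p)=\Hent(\theta_i,p)-\Hent(\theta_i)$, the same appeal to \Cref{lemma:cross_ent_convex} for convexity, and the same collapse of the cross-entropy sum to $2\Hent(\theta^{*})$ at the midpoint, identified with $2\JS(\theta_1,\theta_2)$. The only cosmetic difference is that you locate the minimizer by an explicit first-order condition, whereas the paper uses the identity $\Hent(\theta_1,p)+\Hent(\theta_2,p)=2\Hent\left(\frac{\theta_1+\theta_2}{2},p\right)$ (linearity of cross-entropy in its first argument); both arguments are valid, and yours is if anything slightly more explicit about why $\theta^{*}$ is the global minimizer.
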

\begin{proof}
We have
\begin{multline*}
\KL(\theta_1,p)+\KL(\theta_2,p) \\= \Hent(\theta_1,p)+\Hent(\theta_2,p) - \Hent(\theta_1)-\Hent(\theta_2) 
\end{multline*}
Now the existence of the minimum at $p=\theta^{*}$ follows by convexity of $p\to \Hent(\theta_1,p)+\Hent(\theta_2,p) $, proved in \Cref{lemma:cross_ent_convex}.
We note that $\Hent(\theta_1,p)+\Hent(\theta_2,p)  = 2\Hent\left(\frac{\theta_1+\theta_2}{2},p\right)$ for any $p$ (by definition), and thus for $p=\frac{\theta_1+\theta_2}{2}=\theta^*$ we obtain $\Hent(\theta_1,p)+\Hent(\theta_2,p) =2H(\theta^*)$ and $\KL(\theta_1,p)+\KL(\theta_2,p) = 2\Hent(\theta^*)-\Hent(\theta_1)-\Hent(\theta_2) $. This combined with the definition of the Jensen-Shannon divergence finishes the proof.
\end{proof}  
We can now bound \Cref{eq:b_factor1} as
\begin{align}\label{eq:b_factor2}
\min_{H_a}\frac{\Pr[  \mathcal{D} | H_{0}]}{\Pr[  \mathcal{D} | H_{a}]} \leqslant \mathrm{e}^{-2r\cdot \JS(\theta^{*})}
\end{align}
This proves the first part of \Cref{thm:main}
\paragraph{Connecting t-statistic and bayes factor exponent}
Recall that by \Cref{claim:welch} under t-test we have 
\begin{align}\label{eq:t_student}
T \approx r^{\frac{1}{2}}\cdot |\theta_1-\theta_2|\cdot \left(\theta_1(1-\theta_1) + \theta_2(1-\theta_2) \right)
\end{align}
It remains to connect $|\theta_1-\theta_2|$ and $\JS(\theta_1,\theta_2)$. By \Cref{thm:ent_quadratic} we have the following refinement of Pinsker's inequality
\begin{claim}\label{claim:kl_ineq}
We have $\KL(\theta,p)\geqslant \frac{(\theta-p)^2}{2\theta(1-\theta)}$. 
\end{claim}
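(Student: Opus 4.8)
The plan is to derive Claim~\ref{claim:kl_ineq} as the binary specialization of the general quadratic bound established inside the proof of \Cref{thm:ent_quadratic}. That proof shows, for any two probability vectors $(p_i)_i$ and $(x_i)_i$, the inequality $\KL \geqslant \sum_i \frac{(p_i-x_i)^2}{2p_i}$, where the weights $p_i$ belong to the first (reference) argument. So first I would instantiate this with the two length-two vectors $(\theta,1-\theta)$ and $(p,1-p)$ representing $\mathrm{Ber}(\theta)$ and $\mathrm{Ber}(p)$, taking care that $\theta$ plays the role of the reference distribution, consistent with the convention $\KL(p,q)=\Hent(p,q)-\Hent(p)$ in which the first argument carries the weights. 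This alignment is what makes the variance factor come out as $\theta(1-\theta)$ rather than $p(1-p)$, matching the claim.

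Then the remaining step is purely algebraic: the two summands become $\frac{(\theta-p)^2}{2\theta}$ and $\frac{((1-\theta)-(1-p))^2}{2(1-\theta)}=\frac{(\theta-p)^2}{2(1-\theta)}$, which share the common numerator $(\theta-p)^2$ and combine through $\frac{1}{\theta}+\frac{1}{1-\theta}=\frac{1}{\theta(1-\theta)}$ to give exactly $\frac{(\theta-p)^2}{2\theta(1-\theta)}$. As a sanity check I would verify that this is genuinely a refinement of Pinsker: since $\theta(1-\theta)\leqslant \tfrac14$ we have $\frac{1}{2\theta(1-\theta)}\geqslant 2$, so the bound would imply $\KL(\theta,p)\geqslant 2(\theta-p)^2$, recovering the binary Pinsker inequality and justifying the name.

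The step I expect to be the main obstacle is the validity of the one-coordinate logarithm inequality underpinning \Cref{thm:ent_quadratic}. The elementary estimate $\log(1+u)\geqslant u-\tfrac12 u^2$ that drives that proof holds only for $u\geqslant 0$ (for $u<0$ the inequality reverses), whereas in the binary specialization the two coordinate deviations $t_1=(\theta-p)/\theta$ and $t_2=(p-\theta)/(1-\theta)$ always carry opposite signs as soon as $\theta\neq p$. Hence the per-coordinate quadratic estimate applies cleanly to at most one of the two terms, and I would want to check carefully whether the clean bound really survives for all $p$ or only on one side of $\theta$ (say the regime where $p$ and $\theta$ stay close, so that the remainder stays controlled). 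Resolving this sign mismatch, rather than the routine algebra in the previous paragraph, is where I anticipate the real work — and where I would scrutinize the stated constant most closely before committing to it.
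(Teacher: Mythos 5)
Your reading of the intended route is exactly right---the paper proves this claim precisely as you propose, by specializing the vector inequality $\KL\geqslant \sum_i (p_i-x_i)^2/2p_i$ from the proof of \Cref{thm:ent_quadratic} to the vectors $(\theta,1-\theta)$ and $(p,1-p)$---and your algebra for combining the two coordinates is correct. But the obstacle you flagged in your final paragraph is not a technicality to be worked around: it is fatal, and in fact the claim itself is false as stated. The elementary bound $\log(1+u)\geqslant u-\tfrac12 u^2$ holds only for $u\geqslant 0$, and as you observed the two coordinate deviations $(p-\theta)/\theta$ and $(\theta-p)/(1-\theta)$ always have opposite signs when $p\neq\theta$, so the per-coordinate estimate is necessarily applied in the reversed direction on one of the two coordinates. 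A concrete counterexample: for $\theta=0.1$, $p=0.2$ one has $\KL(0.1,0.2)=0.1\log\tfrac{1}{2}+0.9\log\tfrac{9}{8}\approx 0.0367$, while $\tfrac{(\theta-p)^2}{2\theta(1-\theta)}=\tfrac{0.01}{0.18}\approx 0.0556$; for $\theta=0.1$, $p=0.9$ the gap is worse ($\KL\approx 1.76$ versus $\approx 3.56$). The paper's own proof of \Cref{thm:ent_quadratic} commits exactly the sign error you identified, so what you have found is a bug in the paper, not a defect of your write-up.

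The precise picture is this: since $\tfrac{\partial}{\partial q}\KL(\theta,q)=\tfrac{q-\theta}{q(1-q)}$, one has $\KL(\theta,p)=\int_\theta^p \tfrac{q-\theta}{q(1-q)}\,\mathrm{d}q$, so the claimed bound holds whenever $q(1-q)\leqslant\theta(1-\theta)$ along the whole path of integration, i.e.\ whenever $\theta$ lies between $\tfrac12$ and $p$; it always fails when $p$ lies strictly between $\theta$ and $\tfrac12$. The paper invokes the claim at $p=\theta^{*}=\tfrac{\theta_1+\theta_2}{2}$, which lies between $\theta_1$ and $\theta_2$, so (when both are on the same side of $\tfrac12$, the typical conversion-testing regime) one of the two terms $\KL(\theta_i,\theta^{*})$ is always in the bad direction. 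The damage propagates: the downstream inequality \Cref{eq:w_js} is also false, e.g.\ for $\theta_1=0.1$, $\theta_2=0.3$ one computes $\JS(\theta_1,\theta_2)\approx 0.0324$ while $\twelch^2/4r=\tfrac{(\theta_1-\theta_2)^2}{4\left(\theta_1(1-\theta_1)+\theta_2(1-\theta_2)\right)}\approx 0.0333$. What survives unconditionally is Pinsker's inequality $\KL(\theta,p)\geqslant 2(\theta-p)^2$, i.e.\ the claim with $\theta(1-\theta)$ replaced by its maximal value $\tfrac14$; any correct variance-adapted version must either assume $\theta$ lies between $\tfrac12$ and $p$, or replace $\theta(1-\theta)$ by $\max q(1-q)$ over $q$ between $\theta$ and $p$. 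You were right to refuse to commit to the stated constant.
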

Using $2\JS(\theta_1,\theta_2) = \KL(\theta_1,\theta^{*}) +  \KL(\theta_2,\theta^{*})$,
the inequality from \Cref{claim:kl_ineq}, and the Welch's formula in \Cref{eq:t_student} we obtain
\begin{claim}
We have
\begin{align}\label{eq:w_js}
\JS(\theta_1,\theta_2) \geqslant\frac{ \twelch(t,\theta_1,\theta_2)^2}{4r}
\end{align}
\end{claim}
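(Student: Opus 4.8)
The plan is to chain together three ingredients already established above: the exact decomposition $2\JS(\theta_1,\theta_2)=\KL(\theta_1,\theta^{*})+\KL(\theta_2,\theta^{*})$ with $\theta^{*}=\frac{\theta_1+\theta_2}{2}$, the refined Pinsker inequality of \Cref{claim:kl_ineq}, and one elementary mean inequality. First I would apply \Cref{claim:kl_ineq} to each $\KL$ term. Since $\theta_1-\theta^{*}=\frac{\theta_1-\theta_2}{2}$ and $\theta_2-\theta^{*}=-\frac{\theta_1-\theta_2}{2}$, both squared deviations equal $\frac{(\theta_1-\theta_2)^2}{4}$, so the two lower bounds share a common numerator and differ only through their variance denominators. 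Collecting terms gives
\begin{align*}
2\JS(\theta_1,\theta_2)\geqslant \frac{(\theta_1-\theta_2)^2}{8}\left(\frac{1}{\theta_1(1-\theta_1)}+\frac{1}{\theta_2(1-\theta_2)}\right).
\end{align*}

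The remaining task is purely algebraic: the Welch denominator is a \emph{sum} of the two variances, whereas the bound just obtained carries a sum of their \emph{reciprocals}. Writing $a=\theta_1(1-\theta_1)$ and $b=\theta_2(1-\theta_2)$, these are reconciled by the harmonic--arithmetic mean inequality $\frac{1}{a}+\frac{1}{b}\geqslant\frac{4}{a+b}$, which is equivalent to $(a-b)^2\geqslant 0$ after clearing denominators. Substituting and then invoking the Welch formula of \Cref{claim:welch} yields
\begin{align*}
2\JS(\theta_1,\theta_2)\geqslant \frac{(\theta_1-\theta_2)^2}{2\bigl(\theta_1(1-\theta_1)+\theta_2(1-\theta_2)\bigr)}=\frac{\twelch(t,\theta_1,\theta_2)^2}{2r},
\end{align*}
and dividing by two produces \Cref{eq:w_js}.

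I expect no genuine obstacle, since every step is a short manipulation; the only point demanding attention is the direction of the mean inequality, as only $\frac{1}{a}+\frac{1}{b}\geqslant\frac{4}{a+b}$ (and not its reverse) yields a valid lower bound on $\JS$. It is also worth recording where the chain is tight: the Pinsker step of \Cref{claim:kl_ineq} is sharp in the small-deviation regime, and the mean inequality is an equality precisely when $a=b$, i.e. when the two groups carry equal Bernoulli variance. Hence the bound loses the least when $\theta_1$ and $\theta_2$ are close and of comparable variance — exactly the regime relevant to the sample-size rules in \Cref{cor:sample_bound}.
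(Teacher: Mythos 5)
You follow the same route as the paper's own proof --- the decomposition $2\JS(\theta_1,\theta_2)=\KL(\theta_1,\theta^{*})+\KL(\theta_2,\theta^{*})$, a termwise application of \Cref{claim:kl_ineq}, and algebra to reach the Welch form --- but you execute it correctly where the paper does not. The paper's first display asserts
\begin{align*}
2\JS(\theta_1,\theta_2)\geqslant (\theta_1-\theta_2)^2\cdot\left(\frac{1}{2\theta_1(1-\theta_1)}+\frac{1}{2\theta_2(1-\theta_2)}\right),
\end{align*}
which forgets that $(\theta_i-\theta^{*})^2=(\theta_1-\theta_2)^2/4$ and is therefore four times stronger than what \Cref{claim:kl_ineq} actually yields; after this overshoot only the trivial bound $\frac{1}{a}+\frac{1}{b}\geqslant\frac{1}{a+b}$ is needed to ``recognize'' the Welch statistic, and the paper never explains how a sum of reciprocals turns into the reciprocal of a sum. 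Your version keeps the correct factor $\frac{1}{8}$ and supplies exactly the missing bridge, the harmonic--arithmetic mean inequality $\frac{1}{a}+\frac{1}{b}\geqslant\frac{4}{a+b}$, which compensates precisely for that factor of four. (Both you and the paper read \Cref{claim:welch} as $\twelch=r^{1/2}(\theta_1-\theta_2)/\sqrt{a+b}$; the printed $r^{-1/2}$ is a typo.) Taken as a derivation from the paper's stated lemmas, your proof is the repaired version of the paper's own argument.

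The genuine problem is one you inherit rather than introduce: \Cref{claim:kl_ineq} is false, and consequently so is the statement being proved. \Cref{claim:kl_ineq} is quoted from \Cref{thm:ent_quadratic}, whose proof invokes $\log(1+u)\geqslant u-\frac{1}{2}u^2$ with $u=(x_i-p_i)/p_i$; this inequality holds only for $u\geqslant 0$ and reverses for $u<0$, and whenever $x\not=p$ one of the two coordinates has $x_i<p_i$. Numerically, $\KL(0.4,0.5)\approx 0.02014$ while $\frac{(0.1)^2}{2\cdot 0.4\cdot 0.6}\approx 0.02083$, so \Cref{claim:kl_ineq} fails. The target inequality then fails as well (the paper's use of $\mathrm{e}^{-r\Hent(\cdot,\cdot)}$ forces natural logarithms): at $\theta_1=0.4$, $\theta_2=0.6$ one has $\JS(\theta_1,\theta_2)\approx 0.02014<0.02083=\twelch^2/(4r)$, and for separated proportions the failure is not marginal --- $\JS$ is bounded by $\log 2\approx 0.69$, whereas $\twelch^2/(4r)=\frac{(\theta_1-\theta_2)^2}{4\left(\theta_1(1-\theta_1)+\theta_2(1-\theta_2)\right)}$ is unbounded, e.g.\ about $12.1$ at $\theta_1=0.01$, $\theta_2=0.99$. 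The constant in \Cref{eq:w_js} is exactly the second-order Taylor constant at $\theta_1=\theta_2$, so the two sides agree in the limit and the inequality breaks in the higher-order terms; no proof can close this gap. What survives is the weaker Pinsker-type bound $\JS(\theta_1,\theta_2)\geqslant\frac{1}{2}(\theta_1-\theta_2)^2$, obtained by running your argument with Pinsker's inequality $\KL(\theta,p)\geqslant 2(\theta-p)^2$ in place of \Cref{claim:kl_ineq}; since $\theta_1(1-\theta_1)+\theta_2(1-\theta_2)\leqslant\frac{1}{2}$, this is strictly weaker than the claim, and the discrepancy is in the statement itself, not in your algebra.
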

\begin{proof}
\Cref{claim:kl_ineq} implies
\begin{align}
2\JS(\theta_1,\theta_2)\geqslant (\theta_1-\theta_2)^2\cdot \left(\frac{1}{2\theta_1(1-\theta_1)}+\frac{1}{2\theta_1(1-\theta_2)}\right)
\end{align}
we recognize the Weltch's statistic and write
\begin{align}
2\JS(\theta_1,\theta_2)\geqslant \frac{ \twelch(t,\theta_1,\theta_2)^2}{2r}
\end{align}
\end{proof}
Combining \Cref{eq:b_factor2} and \Cref{eq:w_js} implies the second part of the theorem.


\section*{Acknowledgments}
The author thanks to Evan Miller for inspiring discussions.

\bibliographystyle{apalike}
{\small
\bibliography{citations}}

\appendix

\end{document}